\newtheorem{theorem}{Theorem}
\newtheorem{definition}{Definition}
\newcommand{\proc}[1]{\textup{\textsf{#1}}}
\def\BibTeX{{\rm B\kern-.05em{\sc i\kern-.025em b}\kern-.08em
    T\kern-.1667em\lower.7ex\hbox{E}\kern-.125emX}}
\begin{document}

\title{Quantum algorithm for\\copula-based risk aggregation\\using orthogonal series density estimation
}

\author{\IEEEauthorblockN{1\textsuperscript{st} Hitomi Mori}
\IEEEauthorblockA{\textit{Graduate School of Engineering Science} \\
\textit{Osaka University}\\
Toyonaka, Osaka, Japan \\
hmori.academic@gmail.com}
\and
\IEEEauthorblockN{2\textsuperscript{nd} Koichi Miyamoto}
\IEEEauthorblockA{\textit{Center for Quantum Information and Quantum Biology} \\
\textit{Osaka University}\\
Toyonaka, Osaka, Japan \\
miyamoto.kouichi.qiqb@osaka-u.ac.jp}
}

\maketitle

\begin{abstract}
Quantum Monte Carlo integration (QMCI) provides a quadratic speed-up over its classical counterpart, and its applications have been investigated in various fields, including finance. This paper considers its application to risk aggregation, one of the most important numerical tasks in financial risk management. Risk aggregation combines several risk variables and quantifies the total amount of risk, taking into account the correlation among them. For this task, there exists a useful tool called copula, with which the joint distribution can be generated from marginal distributions with a flexible correlation structure. Classically, the copula-based method utilizes sampling of risk variables. However, this procedure is not directly applicable to the quantum setting, where sampled values are not stored as classical data, and thus no efficient quantum algorithm is known. In this paper, we introduce a quantum algorithm for copula-based risk aggregation that is compatible with QMCI. In our algorithm, we first estimate each marginal distribution as a series of orthogonal functions, where the coefficients can be calculated with QMCI. Then, by plugging the marginal distributions into the copula and obtaining the joint distribution, we estimate risk measures using QMCI again. With this algorithm, nearly quadratic quantum speed-up can be obtained for sufficiently smooth marginal distributions.
\end{abstract}

\begin{IEEEkeywords}
Quantum algorithm, Quantum Monte Carlo integration, Finance, Risk aggregation
\end{IEEEkeywords}

\section{Introduction}\label{sec1}
Monte Carlo integration is a widely used method to perform numerical integration using random sampling. It is known that quantum Monte Carlo integration (QMCI) provides a quadratic speed-up over its classical counterpart \cite{Montanaro_2015}.
Applications of QMCI have been considered across various fields such as statistical physics\cite{Montanaro_2015,Cornelissen2023,PhysRevD.107.114511}, nuclear physics\cite{PhysRevD.109.076025}, graph theory\cite{Montanaro_2015}, and machine learning\cite{pmlr-v139-wang21w,wiedemann2023quantum,Wan_Zhang_Li_Zhang_Sun_2023,NEURIPS2023_401aa72e,hikima2024quantum}, and among them, applications in finance have been investigated particularly intensively.
In finance, Monte Carlo integration is used in various tasks. Studies have explored potential applications of QMCI in risk analysis \cite{Woerner_2019,Egger2021,miyamoto2022quantum,wilkens2023quantum} and derivative pricing \cite{Rebentrost_2018,Stamatopoulos_2020,9618807,Stamatopoulos_2022}\footnote{See also a comprehensive review \cite{herman2023quantum} on financial applications of quantum computing including not only QMCI but also optimization and machine learning.}.

Risk aggregation stands as a paramount objective within risk analysis and plays a crucial role in the management of financial institutions \cite{embrechts2002correlation}. Risk aggregation is preceded by risk analysis, wherein substantial risks for a company are identified and quantified. 
For example, insurance companies typically have two major risk categories: underwriting risk and investment risk. Underwriting risk reflects the uncertainty of the total amount of loss that may occur from their insurance contracts, while investment risk arises from the investment of insurance premiums. Furthermore, these categories are broken down into more granular risk categories.
Each risk is then quantified using the loss distribution obtained through Monte Carlo simulation with an appropriate stochastic model. The risk amount can be determined as the quantile of the distribution.
In risk aggregation, these simulated loss distributions are combined to derive the total amount of risk, considering the correlation among different risks\footnote{When risks have perfect correlation, meaning the correlation coefficients are 1, the total risk amount is just the sum of each risk amount. However, in general, the correlation coefficients are within $[-1,1]$, and thus the total risk amount can be smaller than the simple sum, benefiting from the so-called diversification effect.}.
Consequently, a company can obtain the total risk amount representing a certain level (e.g. 1-in-100 year) of potential total loss. By comparing the total risk amount with its capital, the company can assess the sufficiency of its capital. Moreover, the company can calculate other related metrics\footnote{Such as return on risk (the ratio of return to risk amount) and risk contribution (the amount contributing to total risk amount).} for each business unit to develop business plans and determine capital allocation.

To incorporate the correlation structure into the modelling, a statistical tool called copula \cite{Nels06} is widely used in finance. Copula is a function that takes marginal distribution functions of multiple random variables and returns the joint distribution function. This allows for the modeling of multiple risk variables with a flexible correlation structure.
When copula is employed within Monte Carlo simulation, where computations are conducted based on sample values, the correlation is reflected by rearranging the samples.
In this process, the initial step involves sorting the sample values of each risk variable stored in memory. Then, the set of sample values is rearranged according to the order specified by the copula. This rearranged set is regarded as a set of samples from the joint distribution.
However, this approach is not directly applicable to QMCI, where sample values are not stored in memory, and thus no efficient quantum algorithm is known. 

In this paper, we introduce a quantum algorithm for copula-based risk aggregation that aligns with QMCI. In our algorithm, assuming access to the oracles to generate quantum states encoding the possible values of the risk variables, we first estimate the marginal distribution functions of each risk variable using orthogonal series density estimation (OSDE) \cite{Efromovich_2010}.
In OSDE, distribution functions are approximated as a series of orthogonal functions with coefficients represented as expected values of these functions. QMCI is then employed to estimate these expected values. Subsequently, we derive the joint distribution by plugging the estimated marginal distribution functions into the copula.
Finally, we estimate the risk measures from the joint distribution using QMCI.
Through this approach, we successfully obtain the quantum speed-up.
The number of queries to the state generation oracle of each risk variable scales on the required accuracy $\epsilon$ as $\widetilde{O}\left(\left(\frac{1}{\epsilon}\right)^{1+\frac{8}{2r-1}}\right)$, where $r$ is a parameter characterizing the smoothness of the marginal distribution functions.
Thus, if the smoothness $r$ is high, the query complexity approaches $\widetilde{O}(1/\epsilon)$, which represents a quadratic improvement over the sample complexity of order $\widetilde{O}(1/\epsilon^2)$ in the classical algorithm.

The organization of this paper is as follows.
In Section \ref{sec2}, we introduce notations and preliminary knowledge on risk aggregation, OSDE, and QMCI. We present the quantum algorithm for risk aggregation in Section \ref{sec:OurQAlgo} along with an analysis on the error and complexity of this algorithm, followed by the conclusion in Section \ref{sec:Concl}.

\section{Preliminaries}\label{sec2}
In this section, we introduce notations and preliminary knowledge used to develop the quantum algorithm in Section \ref{sec:OurQAlgo}.

\subsection{Notations}\label{sec2-1}
Throughout this paper, we use the notations commonly used in statistics. We use upper case letter (e.g. $X$) to denote random variable and use lower case letter (e.g. $x$) to denote the value of the random variable. For a $\mathbb{R}^d$-valued random variable $\bm{X}=(X_1,\cdots,X_d)$, its probability density function (PDF) $f:\mathbb{R}^d\rightarrow\mathbb{R}$ satisfies
\begin{multline}
    \mathrm{Pr}(\alpha_1\le X_1\le\beta_1,\cdots,\alpha_d\le X_d\le\beta_d)\\
    =\int_{\alpha_1}^{\beta_1}dt_1\cdots\int_{\alpha_d}^{\beta_d}dt_d f(t_1,\cdots,t_d),
\end{multline}
and the cumulative distribution function (CDF) $F:\mathbb{R}^d\rightarrow[0,1]$ is
\begin{align}
    F(x_1,\cdots,x_d)&=\mathrm{Pr}(X_1 \le x_1,\cdots,X_d \le x_d)\\
    &=\int_{-\infty}^{x_1}dt_1\cdots\int_{-\infty}^{x_d}dt_d f(t_1,\cdots,t_d).
\end{align}
$\mathbb{E}_{\bm{X}}[\cdot]$ denotes the expected value; for example,
\begin{align}
    \mathbb{E}_{\bm{X}}[\phi(\bm{X})]=\int_{-\infty}^\infty dt_1\cdots\int_{-\infty}^\infty dt_d \phi(t_1,\cdots,t_d)f(t_1,\cdots,t_d).
\end{align}
Estimators are represented by the hat symbol, such as $\hat{a}$.

As a quantum computing platform, we hereafter consider a system of multiple qubits.
When we express a variable $x\in\mathbb{R}$ with $n$ qubits, for $N=2^n$ and $j\in\{0,1,\cdots,N-1\}$, we represent the computational basis as $\ket{j}$ and the value of $x$ on the $j$th grid point as $x^{(j)}$.
We assume that every quantum register has a sufficient number of qubits and thus consider discretization errors to be negligible.

$\mathbbm{1}_C$ denotes the indicator function, which takes 1 if the condition $C$ is satisfied and $0$ otherwise.

For $x\in\mathbb{R}$ and $\epsilon>0$, we say that $x^\prime\in\mathbb{R}$ is an $\epsilon$-approximation of $x$ if $|x-x^\prime|\le\epsilon$.

\subsection{Risk aggregation}\label{sec2-2}
In financial mathematics, the term ``risk" indicates uncertainties regarding potential profits or losses. Risks are quantified according to risk measures, with two widely used types: Value at Risk (VaR) and Tail Value at Risk (TVaR)\footnote{Also known as Conditional Value at Risk (CVaR).}. VaR is a certain quantile of the loss distribution, and it is most common in financial regulations and disclosures. VaR of a random variable $X$ at a confidence level $\alpha\in[0,1]$ is expressed as
\begin{align}
    \mathrm{VaR}_X(\alpha):=\inf\{x\in\mathbb{R}:F(x) \ge \alpha\}=F^{-1}(\alpha).
\end{align}
TVaR is the conditional expected value of the loss in the tail of the distribution that exceeds the VaR. Because this can capture extreme loss events included in the tail, it is preferred for natural catastrophe risks such as hurricane risk. TVaR of $X$ at a confidence level $\alpha$ is expressed as
\begin{align}
    %\mathrm{TVaR}=\frac{1}{1-\alpha}\int^1_{\alpha}\mathrm{VaR}_t(X)dt.
    \mathrm{TVaR}_X(\alpha)&:=\mathbb{E}_X[X|X \ge \mathrm{VaR}_X(\alpha)] \nonumber \\
    &= \frac{1}{1-\alpha}\int^\infty_{\mathrm{VaR}_X(\alpha)} tf(t)dt.
\end{align}

Although risk measures for each risk variable $X_1,\cdots,X_d$ are of practical interest, those for the sum $S=X_1+\cdots+X_d$ are also crucial.
Risk aggregation estimates this total amount of risk by modeling the joint distribution of $X_1,\cdots,X_d$.
For this purpose, copulas are often used, because they enable the modeling of a flexible correlation structure.
A copula is a CDF for a $[0,1]^d$-valued random variable, where each variable is uniformly distributed on $[0,1]$. By using a copula, we can handle the individual univariate marginal distributions and their dependency separately, thanks to Sklar's theorem, which guarantees the consistency between the copula-based joint distribution and each marginal distribution:\\

\begin{theorem}[Sklar's Theorem \cite{Nels06}] For any joint distribution function $F$ for $\mathbb{R}^d$-valued random variable $\bm{X}=(X_1,\cdots,X_d)$ with marginal distributions $F_1,...,F_d$, there exists a copula $C$ such that
\begin{align}
    F(x_1,...,x_d)=C(F_1(x_1),...,F_d(x_d))
    \label{eq:copula}
\end{align}
for any $(x_1,\cdots,x_d)\in\mathbb{R}^d$.
If $F_1,...,F_d$ are continuous, then $C$ is unique. Conversely, for any marginal distributions $F_1,...,F_d$ and $d$-variate copula $C$, the function $F$ defined as in \eqref{eq:copula} is a CDF with marginal distributions $F_1,...,F_d$.
\end{theorem}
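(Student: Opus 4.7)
The plan is to split the argument into the continuous case (which yields both existence and uniqueness cleanly), the extension to general marginals, and the converse. First I would introduce the generalized inverse (quantile function) $F_i^{-1}(u):=\inf\{x\in\mathbb{R}:F_i(x)\ge u\}$ for each marginal, and recall the basic facts that $F_i(F_i^{-1}(u))\ge u$ and $F_i^{-1}(F_i(x))\le x$, with equality at points of continuity. The key probabilistic tool is the probability integral transform: if $X_i$ has continuous CDF $F_i$, then $U_i:=F_i(X_i)$ is uniformly distributed on $[0,1]$.

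For existence under continuous marginals, I would simply define
\begin{equation}
C(u_1,\ldots,u_d):=F\bigl(F_1^{-1}(u_1),\ldots,F_d^{-1}(u_d)\bigr), \qquad (u_1,\ldots,u_d)\in[0,1]^d,
\end{equation}
and verify directly that $C$ is a CDF on $[0,1]^d$ with uniform marginals. Monotonicity, right-continuity, and the $d$-increasing property are inherited from $F$; uniform marginality follows from the integral transform above applied to the random vector $(F_1(X_1),\ldots,F_d(X_d))$. Plugging $u_i=F_i(x_i)$ into the definition and using $F_i^{-1}(F_i(x_i))=x_i$ (valid for continuous $F_i$) recovers \eqref{eq:copula}.

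For uniqueness when the $F_i$ are continuous, I would argue that the image of each $F_i$ is all of $[0,1]$, so for any $(u_1,\ldots,u_d)\in[0,1]^d$ there exist $x_i$ with $F_i(x_i)=u_i$, forcing $C(u_1,\ldots,u_d)=F(x_1,\ldots,x_d)$; thus $C$ is completely pinned down by $F$ and the marginals. For the general case (no continuity assumption), the value of $C$ is only determined on $\mathrm{Range}(F_1)\times\cdots\times\mathrm{Range}(F_d)$, so one must extend $C$ to all of $[0,1]^d$ — the standard device here is the distributional transform $\tilde F_i(x,\lambda):=F_i(x-)+\lambda\bigl(F_i(x)-F_i(x-)\bigr)$ with an auxiliary uniform $\lambda$, which makes $\tilde F_i(X_i,\Lambda_i)$ uniform even when $F_i$ has jumps; this is the main technical obstacle, since ordinary inverses fail at jump points. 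The converse is essentially routine: given marginals $F_1,\ldots,F_d$ and a copula $C$, one checks that $F(x_1,\ldots,x_d):=C(F_1(x_1),\ldots,F_d(x_d))$ inherits monotonicity, right-continuity, the $d$-increasing property, and the correct limits $0$ and $1$ from $C$ and the $F_i$, and that setting all but one argument to $+\infty$ recovers $F_i$ using the uniform-marginal property of $C$.
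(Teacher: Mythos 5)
The paper does not prove this statement; it quotes Sklar's theorem from \cite{Nels06} as a known result, so there is no in-paper argument to compare yours against. Your outline follows the standard textbook route (generalized inverses for the continuous case, extension of a subcopula for the general case, direct verification for the converse), but two steps contain genuine gaps.

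First, in the continuous case you recover \eqref{eq:copula} by invoking $F_i^{-1}(F_i(x_i))=x_i$ as ``valid for continuous $F_i$.'' That identity is false when $F_i$ is continuous but not strictly increasing: if $F_i$ is constant on $[a,b]$ and increases just below $a$, then $F_i^{-1}(F_i(b))=a<b$. Continuity gives you $F_i(F_i^{-1}(u))=u$, not the composition in the other order. The standard repair is to note that $\mathrm{Pr}(a<X_i\le b)=F_i(b)-F_i(a)=0$ on any such flat piece, hence $F(\ldots,a,\ldots)=F(\ldots,b,\ldots)$, so that $F\bigl(F_1^{-1}(F_1(x_1)),\ldots,F_d^{-1}(F_d(x_d))\bigr)=F(x_1,\ldots,x_d)$ holds even though the coordinatewise identity fails. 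Without this observation the key identity \eqref{eq:copula} is not established even in the continuous case.

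Second, for general (possibly discontinuous) marginals you correctly identify that $C$ is pinned down only on $\mathrm{Range}(F_1)\times\cdots\times\mathrm{Range}(F_d)$ and that one must extend it to $[0,1]^d$, but you then stop after naming the distributional transform. This extension is the actual technical content of Sklar's theorem: one must either prove the subcopula-extension lemma (multilinear interpolation preserving the $d$-increasing property, as in Nelsen), or, with the distributional transform $\tilde F_i(x,\lambda)=F_i(x-)+\lambda\bigl(F_i(x)-F_i(x-)\bigr)$, verify both that the law of $\bigl(\tilde F_1(X_1,\Lambda_1),\ldots,\tilde F_d(X_d,\Lambda_d)\bigr)$ is a copula and that evaluating it at the untransformed points $(F_1(x_1),\ldots,F_d(x_d))$ reproduces $F$, which requires a computation at the jump points. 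As written, existence in the general case is asserted rather than proved. The uniqueness argument for continuous marginals and the converse direction are essentially correct as sketched.
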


\ \\

Defining the density of the copula as
\begin{align}
    c(u_1,...,u_d)=\frac{\partial}{\partial u_1... \partial u_d}C(u_1,...,u_d),
\end{align}
the joint PDF can be expressed as
\begin{align}
    f(x_1,...,x_d)=c(F_1(x_1),...,F_d(x_d))\prod^d_{i=1}f_i(x_i).
\end{align}

The popular choices of copula include Gaussian copula and T-copula, the details of which can be found in appendix \ref{sec:Copula}.

In practice, we often rely on samples for risk calculation. In many cases, explicit formulas for each marginal CDF $F_i$ may not be available, but we can prepare samples of $X_i$\footnote{For example, if the random time evolution of $X_i$ is modeled by a stochastic differential equation (SDE), which is common for market risk variables, we generally do not have an explicit formula for the CDF of $X_i$ at a future time $T$. However, we can generate sample paths of $X_i$ from the present to $T$ using some discretization method of the SDE, such as the Euler-Maruyama method \cite{Maruyama1955}.}.
Using samples, $F_i$ can be estimated based on ordering \cite{MAINIK2015197}, where we consider the $k$-th smallest sample value as the $100\frac{k}{N}$-percentile point of $X_i$. Specifically, denoting the $N$ sample values of $X_i$ sorted in the ascending order by $x_i^{(1)},\cdots,x_i^{(N)}$, we define the percentile function $\tilde{F}_{i}^{-1}:[0,1]\rightarrow\mathbb{R}$ of $X_i$ by
\begin{equation}
    \tilde{F}_{i}^{-1}(u) :=
    \begin{cases}
        x_i^{(1)} & \mathrm{for} \ 0 \le u \le \frac{1}{N}, \\
        x_i^{(2)} & \mathrm{for} \ \frac{1}{N} < u \le \frac{2}{N}, \\
        & \vdots \\
        x_i^{(N)} & \mathrm{for} \ \frac{N-1}{N} < u \le 1
    \end{cases}.
    \label{eq:PercentileFunc}
\end{equation}
When we have $N$ samples for each $X_i$ from $F_i$ and $(U_1,\cdots,U_d)\in[0,1]^d$ from $C$\footnote{The procedure for sampling from a copula can be found in \cite{embrechts2002correlation}.}, we can generate samples from the joint distribution in \eqref{eq:copula} as described in Algorithm \ref{alg:OrdCopula}.
Given the sample values $\bm{x}^{(1)},\cdots,\bm{x}^{(N)}$, we estimate $\mathrm{VaR}_S(\alpha)$ as $\tilde{F}_{S}^{-1}(\alpha)$, which is defined similarly to \eqref{eq:PercentileFunc}, and $\mathrm{TVaR}_S(\alpha)$ as
\begin{equation}
    \frac{\sum_{l=1}^N s^{(l)} \mathbbm{1}_{s^{(l)} \ge \mathrm{VaR}_S(\alpha)}}{\sum_{l=1}^N \mathbbm{1}_{s^{(l)} \ge \mathrm{VaR}_S(\alpha)}},
\end{equation}
where $s^{(l)}:=\sum_{i=1}^d \tilde{x}_i^{(l)}$.
It is shown that samples $\bm{x}^{(l)}$ generated by this method asymptotically obey $F$ in \eqref{eq:copula} with the CDF converging at a rate $O(N^{-1/2})$, and so do the VaR and TVaR estimations from the samples\footnote{See \cite{MAINIK2015197} for mathematically rigorous statements.}.

\begin{algorithm}[b]
\caption{Sampling from the copula-based joint distribution}\label{alg:OrdCopula}
\begin{algorithmic}[1]
\Require
\Statex Samples from each marginal $F_i$ and copula $C$.

\For{$i=1,\cdots,d$}
\State Get $N$ sample values of $X_i \sim F_i$ 
\State Sort samples in the ascending order $x_i^{(1)},\cdots,x_i^{(N)}$.
\EndFor
\For{$l=1,\cdots,N$}
\State Sample $(U_1^{(l)},...,U_d^{(l)}) \sim C$.
\State Let $\bm{x}^{(l)}:=(\tilde{x}^{(l)}_1,\cdots,\tilde{x}^{(l)}_d)$, where $\tilde{x}_i^{(l)}:=\tilde{F}_{i}^{-1}(u_i^{(l)})$.
\EndFor
\State Output $\bm{x}^{(1)},\cdots,\bm{x}^{(N)}$.
\end{algorithmic}
\end{algorithm}

Note that this procedure based on sampling and rearranging is not directly applicable to the quantum algorithm, because QMCI does not use sample values of $X_i$ but rather a superposition of quantum states that encode the possible values of $X_i$, as discussed in Section \ref{sec:QAE}.

\subsection{Orthogonal series density estimation}\label{sec2-3}
Orthogonal series density estimation \cite{Efromovich_2010} is a technique used to estimate the PDF $f$ of a random variable $X\in\mathbb{R}$ using orthogonal series.
As a specific choice, we employ a series of Hermite functions as outlined in \cite{GREBLICKI1984174}:
\begin{equation}
    h_k(x)=\frac{1}{\sqrt{2^k k! \pi^{1/2}}}H_k(x)e^{-x^2/2}, 
\end{equation}
where
\begin{equation}
    H_k=(-1)^ke^{x^2}\frac{d^k}{dx^k}e^{-x^2}
\end{equation}
is the $k$-th Hermite polynomial for $k=0,1,\cdots$, assuming that the range of $X$ (equivalently, the domain of $f$) is $(-\infty,\infty)$\footnote{Other choices of orthogonal series are also possible depending on the domain of $f$, and the change in this choice will not affect the following discussion in any essential way.}. The Hermite function $h_k$ satisfies the following orthogonal relationship:
\begin{align}
    &\int_{-\infty}^\infty h_k(x) h_l(x)dx=\delta_{kl},
\end{align}
where $k,l\in\mathbb{N}$ and $\delta_{kl}$ is the Kronecker's delta.
We approximate $f$ by the Hermite series ${f}^{(K)}$ defined as
\begin{align}
    &{f}^{(K)}(x):=\sum^K_{k=0} a_k h_k(x)
\end{align}
for $K\in\mathbb{N}$ with the coefficients given by
\begin{equation}
    a_k=\int_{-\infty}^\infty f(x) h_k(x)dx.
\end{equation}
Because $f$ is the PDF of the random variable $X$, $a_k$ can be expressed as the expected value
\begin{equation}
    a_k=\mathbb{E}_X[h_k(X)].
\end{equation}
The error bound for the truncated Hermite series is given by the following theorem, which is extracted from the discussion leading to Theorem 5 in \cite{GREBLICKI1984174}.\\

\begin{theorem}[Approximation by Hermite series]

Let $f:\mathbb{R}\rightarrow\mathbb{R}$ be in $L^\infty \cap L^2$
and of bounded variation on any finite interval. Besides, for some $r\in\mathbb{N}$, $\left(x-\frac{d}{dx}\right)^r f \in L^2$ holds.
Then, for any $L>0$, there exists $\gamma_{f,L}\in\mathbb{R}$ such that
\begin{equation}
    \sup_{x\in[-L,L]} \left|{f}^{(K)}(x)-f(x)\right|\le \gamma_{f,L} K^{-\frac{r}{2}+\frac{1}{4}}
    \label{eq:HSErr}
\end{equation}
holds for any $K\in\mathbb{N}$.
\label{th:HSErr}
\end{theorem}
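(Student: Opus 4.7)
The plan is to exploit the fact that $A = x - \tfrac{d}{dx}$ is the creation (raising) operator for Hermite functions, with adjoint $A^{\ast} = x + \tfrac{d}{dx}$ acting as the lowering operator, and turn the smoothness hypothesis $A^r f \in L^2$ into decay of the Hermite coefficients $a_k$.

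First I would record the algebraic identities $A h_k = \sqrt{2(k+1)}\, h_{k+1}$ and $A^{\ast} h_k = \sqrt{2k}\, h_{k-1}$, which follow from the standard recurrence for $H_k$ and the factor $e^{-x^2/2}$. Iterating and taking the adjoint (justified because $h_k$ and all its derivatives are Schwartz, so boundary terms in integration by parts vanish and $A^r f \in L^2$ by assumption), I would show that the Hermite coefficient $b_k$ of $A^r f$ equals $\sqrt{2^r k!/(k-r)!}\, a_{k-r}$ for $k\ge r$. Parseval applied to $A^r f \in L^2$ then gives the crucial estimate
\begin{equation}
    \sum_{k\ge 0} (k+1)^r a_k^2 \;\le\; C_{f,r} \;:=\; C\,\|A^r f\|_{L^2}^2 < \infty.
\end{equation}

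Next I would invoke the classical pointwise representation theorem for Hermite expansions of functions of bounded variation: under the hypotheses ($f\in L^\infty\cap L^2$, BV on every finite interval), the partial sums $f^{(K)}(x)$ converge to $f(x)$ at every continuity point, so on $[-L,L]$ we have the representation $f(x)-f^{(K)}(x) = \sum_{k>K} a_k h_k(x)$ valid pointwise. I would then use the Plancherel–Rotach/Askey–Wainger style uniform bound $|h_k(x)| \le C_L\, k^{-1/4}$ for all $x\in[-L,L]$ once $k$ exceeds a threshold depending on $L$ (this is the bulk-region bound, since $L$ lies well inside the turning points $\pm\sqrt{2k+1}$ for large $k$).

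Combining the two ingredients by Cauchy–Schwarz,
\begin{equation}
    \bigl|f(x)-f^{(K)}(x)\bigr|^2 \le \Bigl(\sum_{k>K} k^r a_k^2\Bigr)\Bigl(\sum_{k>K} k^{-r} h_k(x)^2\Bigr),
\end{equation}
the first factor is $o(1)$ and bounded by $C_{f,r}$ while the second factor is bounded, for $x\in[-L,L]$, by $C_L^2 \sum_{k>K} k^{-r-1/2} = O(K^{-r+1/2})$. Taking square roots yields the desired rate $K^{-r/2+1/4}$ with a constant $\gamma_{f,L}$ depending on $\|A^r f\|_{L^2}$ and $L$. The main obstacle I anticipate is the uniform Hermite function bound on $[-L,L]$ with the sharp $k^{-1/4}$ exponent; a cruder bound such as $\sup_x |h_k(x)| = O(1)$ would degrade the exponent, so I would cite the Plancherel–Rotach asymptotics (as Greblicki does) rather than attempt a self-contained derivation. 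The BV hypothesis is only used to promote $L^2$ convergence to pointwise equality and can be absorbed into standard facts; the smoothness assumption encoded via $A^r$ is what drives the rate.
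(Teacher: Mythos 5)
The paper does not prove this theorem; it is imported verbatim as a consequence of the discussion leading to Theorem 5 in Greblicki--Pawlak \cite{GREBLICKI1984174}, and your sketch is precisely the standard argument used there: the raising-operator identity $\left(x-\frac{d}{dx}\right)h_k=\sqrt{2(k+1)}\,h_{k+1}$ converts $A^r f\in L^2$ into $\sum_k (k+1)^r a_k^2<\infty$, the Plancherel--Rotach bound $|h_k(x)|\le C_L k^{-1/4}$ on $[-L,L]$ controls the tail, and Cauchy--Schwarz gives the exponent $-\frac{r}{2}+\frac{1}{4}$, with the $L^\infty\cap L^2$ and bounded-variation hypotheses used only to identify the uniform limit of the partial sums with $f$ pointwise. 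Your reconstruction is correct, and the technical caveats you flag (justifying the adjoint computation distributionally, and not settling for the cruder $O(1)$ bound \eqref{eq:HFBnd}) are exactly the right ones.
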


\ \\

The Hermite function also has the following property: for any $k=0,1,\cdots$ and any $x\in\mathbb{R}$,
\begin{equation}
    |h_k(x)|\le \pi^{-1/4}
    \label{eq:HFBnd}
\end{equation}
holds \cite{szeg1939orthogonal}.

\subsection{Quantum Monte Carlo integration}\label{sec:QAE}
We first introduce quantum amplitude estimation (QAE) \cite{Brassard_2002}, which serves as the central subroutine in QMCI and is the source of the quadratic speed-up. QAE is a technique to estimate the squared amplitude of a specific basis state in a superposition, and the error bound of QAE is given by the following theorem.\\

\begin{theorem}[Quantum amplitude estimation \cite{Brassard_2002}]
There is a quantum algorithm called amplitude estimation which takes as input one copy of a quantum state $\ket{\psi}$, a unitary transformation $U=2\ket{\psi}\!\bra{\psi}-I$, a unitary transformation $V=I-2P$ for some projector $P$, and an integer $t$. The algorithm outputs $\tilde{a}$, an estimate of $a=\bra{\psi}P\ket{\psi}$, such that
\begin{align}
    \left|\tilde{a}-a\right|\le2\pi\frac{\sqrt{a(1-a)}}{t}+\frac{\pi^2}{t^2}
\end{align}
with probability %of
at least $8/\pi^2$, using $U$ and $V$ $t$ times each.
\label{th:QAE}
\end{theorem}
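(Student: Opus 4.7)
The plan is to carry out the standard Brassard--Hoyer--Mosca--Tapp construction, which reduces amplitude estimation to eigenvalue (phase) estimation on a Grover-like operator. First I would define $Q=-UV$ and show that the two-dimensional real subspace spanned by the normalized vectors $\ket{\psi_1}\propto P\ket{\psi}$ and $\ket{\psi_0}\propto(I-P)\ket{\psi}$ is invariant under $Q$. Writing $\ket{\psi}=\sin\theta\,\ket{\psi_1}+\cos\theta\,\ket{\psi_0}$ with $a=\sin^2\theta\in[0,1]$, a direct calculation shows that $V$ acts as the reflection about $\ket{\psi_0}$ in this plane while $-U$ acts as the reflection about $\ket{\psi}$, so $Q$ acts as a rotation by angle $2\theta$. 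Consequently $Q$ restricted to this plane has eigenvalues $e^{\pm 2i\theta}$, and $\ket{\psi}$ decomposes as an equal-weight superposition of the two eigenvectors.

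Second, I would apply quantum phase estimation to the controlled-$Q$ on the input $\ket{\psi}$ using $t$ queries, and set $\tilde\theta=\pi y/t$ for the integer outcome $y\in\{0,\ldots,t-1\}$, returning $\tilde a=\sin^2\tilde\theta$. The standard analysis of phase estimation (a Dirichlet-kernel calculation) guarantees that with probability at least $8/\pi^2$ the outcome satisfies $|\tilde\theta-\theta|\le \pi/t$, after identifying the two branches $\theta$ and $\pi-\theta$ that produce the same value of $\sin^2$ and are therefore both acceptable.

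Third, I would convert the angular error into an amplitude error using the identity
\begin{align}
\sin^2\tilde\theta-\sin^2\theta=\sin(\tilde\theta+\theta)\sin(\tilde\theta-\theta).
\end{align}
Expanding $\sin(\tilde\theta+\theta)=\sin 2\theta\cos(\tilde\theta-\theta)+\cos 2\theta\sin(\tilde\theta-\theta)$ and using $|\sin x|\le|x|$, $|\cos x|\le 1$, together with $|\sin 2\theta|=2\sqrt{a(1-a)}$, yields
\begin{align}
|\tilde a-a|\le 2\sqrt{a(1-a)}\,|\tilde\theta-\theta|+|\tilde\theta-\theta|^2.
\end{align}
Substituting $|\tilde\theta-\theta|\le\pi/t$ gives exactly the advertised bound $2\pi\sqrt{a(1-a)}/t+\pi^2/t^2$, and the query count of $t$ is immediate from the phase-estimation circuit.

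The main obstacle I expect is not the algebra above but justifying the precise success probability $8/\pi^2$: this requires the standard Fejer/Dirichlet-kernel estimate inside the phase-estimation analysis, which it is cleanest to quote rather than rederive. A secondary care point is the degenerate cases $a\in\{0,1\}$, where the invariant subspace collapses to one dimension; in those cases the algorithm deterministically returns $\tilde\theta\in\{0,\pi/2\}$ and the bound holds trivially.
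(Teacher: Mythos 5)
Your sketch correctly reproduces the standard Brassard--H{\o}yer--Mosca--Tapp argument: the two-dimensional invariant subspace, phase estimation on the Grover-like rotation with the $8/\pi^2$ Dirichlet-kernel bound, and the conversion $|\sin^2\tilde\theta-\sin^2\theta|\le 2\sqrt{a(1-a)}\,|\tilde\theta-\theta|+|\tilde\theta-\theta|^2$ are exactly the ingredients of Theorem 12 and Lemma 7 in the cited reference (the only quibble is a harmless sign convention in defining the Grover operator, $Q=UV$ versus $-UV$, which shifts the eigenphases by $\pi$ without affecting the analysis). The paper itself gives no proof --- it imports this result verbatim from \cite{Brassard_2002} --- so your proposal matches the intended (cited) proof in approach and is correct.
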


\ \\

Using QAE, we can construct QMCI, the quantum algorithm to estimate $\mathbb{E}_{\bm{Y}}[\phi(\bm{Y})]$ for a random variable $\bm{Y}\in\mathbb{R}^d$ and a function $\phi:\mathbb{R}^d\rightarrow[0,1]$.
Because we assume that we have sufficient qubits to represent real numbers with the negligible discretization error, we can express the expected value in terms of summation:

\begin{equation}
    \mathbb{E}_{\bm{Y}}[\phi(\bm{Y})]=\frac{1}{\mathcal{N}_{\bm{Y}}^2}\sum_{\bm{j}\in\mathcal{G}_{\bm{Y}}} \phi(y_1^{(j_1)},\cdots,y_d^{(j_d)})f(y_1^{(j_1)},\cdots,y_d^{(j_d)}),
\end{equation}
where $\mathcal{G}_{\bm{Y}}=\mathcal{G}_{Y_1}\times\cdots\times\mathcal{G}_{Y_d}$, $\mathcal{G}_{Y_i}=\{0,\cdots,N_i-1\}$ (i.e. $\sum_{\bm{j}\in\mathcal{G}_{\bm{Y}}}=\sum_{j_1=0}^{N_1-1}\cdots\sum_{j_d=0}^{N_d-1}$), and $y_i^{(j_i)}$ is the $j_i$-th discretization point in $y_i$ axis.
Here we define two types of oracles required to run QMCI.\\
%The first one is the oracle to generate a quantum state encoding the PDF in the amplitude.\\

\begin{definition}[State preparation]
For a random variable $\bm{Y}=(Y_1,\cdots,Y_d)\in\mathbb{R}^d$ with the PDF $f$, we have access to the unitary operation $\mathcal{A}_{\bm{Y}}$ that acts as
\begin{multline}
    \mathcal{A}_{\bm{Y}}\ket{0}^{\otimes n_1}\cdots\ket{0}^{\otimes n_d}\\
    =\frac{1}{\mathcal{N}_{\bm{Y}}}\sum_{\bm{j}\in\mathcal{G}_{\bm{Y}}}\sqrt{f(\bm{y}^{(\bm{j})})}\ket{j_1}\cdots\ket{j_d},
\end{multline}
where $\bm{y}^{(\bm{j})}=(y_1^{(j_1)},\cdots,y_d^{(j_d)})$ and
$\mathcal{N}_{\bm{Y}}:=\sqrt{\sum_{\bm{j}\in\mathcal{G}_{\bm{Y}}} f(\bm{y}^{(\bm{j})})}$.
\label{ass:StatePrep}
\end{definition}
\ \\

If we know the exact form of the PDF, then the state preparation can be achieved using existing methods \cite{grover2002creating,Sanders2019,rattew2022preparing,holmes2020efficient,mcardle2022quantum,Moosa_2023}. 
Even if we do not know the exact form, given a quantum circuit to simulate ${\bm{Y}}$, we can perform the state preparation.
For example, if ${\bm{Y}}$ represents a value of a stochastic process at a specific time and the process is modeled by an SDE, the state can be prepared as described in \cite{Carrera_Vazquez_2021,Kaneko2022}.\\

\begin{definition}[Controlled rotation]
    For a function $\phi:\mathbb{R}^d\to[0,1]$, we have access to the unitary operation $W_\phi$ that acts as
\begin{multline}
    W_\phi\ket{j_1}\cdots\ket{j_d}\ket{0}\\
    =\ket{j_1}\cdots\ket{j_d}\left(\sqrt{1-\phi(\bm{y}^{(\bm{j})})}\ket{0}+\sqrt{\phi(\bm{y}^{(\bm{j})})}\ket{1}\right).
\end{multline}
for any $\bm{y}^{(\bm{j})}\in\mathbb{R}^d$.
\label{ass:CRot}
\end{definition}

\ \\

Methods to implement this for some specific cases can be found in \cite{Woerner_2019,Stamatopoulos_2020,Carrera_Vazquez_2021}.
If $\phi$ can be expressed as an elementary function, we can compute $\theta=\arcsin\left(\sqrt{\phi(\bm{y})}\right)$ onto an ancillary register using arithmetic circuits \cite{haner2018optimizing,MunozCores2022} and then apply a rotation gate with the angle controlled 
$\ket{\theta}\ket{0}\rightarrow\ket{\theta}(\cos\theta\ket{0}+\sin\theta\ket{1})$.

Applying $W_\phi(\mathcal{A}_{\bm{Y}}\otimes I)$ to the initial state $\ket{0}^{\otimes s}\ket{0}$ with $s=n_1+\cdots+n_d$, we can prepare the state
\begin{multline}
    \ket{\psi}:=\frac{1}{\mathcal{N}_{\bm{Y}}}\sum_{\bm{j}\in\mathcal{G}_{\bm{Y}}}\sqrt{f(\bm{y}^{(\bm{j})})}\ket{j_1}\cdots\ket{j_d}\\
    \left(\sqrt{1-\phi(\bm{y}^{(\bm{j})})}\ket{0}+\sqrt{\phi(\bm{y}^{(\bm{j})})}\ket{1}\right).
\end{multline}
Note that for $P=I^{\otimes s} \otimes \ket{1}\!\bra{1}$,
\begin{equation}
    \braket{\psi | P | \psi}
    =\frac{1}{\mathcal{N}_{\bm{Y}}^2}\sum_{\bm{j}\in\mathcal{G}_{\bm{Y}}} \phi(\bm{y}^{(\bm{j})})f(\bm{y}^{(\bm{j})})
    ={\mathbb{E}}_{\bm{Y}}[\phi(\bm{Y})].
\end{equation}
Thus, we can estimate ${\mathbb{E}}_{\bm{Y}}[\phi(\bm{Y})]$ using QAE with
\begin{align}
    U&=2\ket{\psi}\!\bra{\psi}-I\\
    &=W_\phi(\mathcal{A}_{\bm{Y}}\otimes I)\left(2\ket{0}\!\bra{0}^{\otimes s}\otimes\ket{0}\!\bra{0}-I\right)(W_\phi(\mathcal{A}_{\bm{Y}}\otimes I))^\dagger.
\end{align}
The complexity of this method, QMCI, is summarized as follows.\\

\begin{theorem}[Quantum Monte Carlo integration; Theorem 2.3 in \cite{Montanaro_2015}, modified]
    Let $\delta,\epsilon\in(0,1)$.
    Assuming we have access to the state preparation oracle $\mathcal{A}_{\bm{Y}}$ for a random variable $\bm{Y}\in\mathbb{R}^d$ and the controlled rotation oracle $W_{\phi}$ for a function $\phi:\mathbb{R}^d\rightarrow[0,1]$, there exists a quantum algorithm called $\proc{QMCI}(\mathcal{A}_{\bm{Y}},\phi,\epsilon,\delta)$ that, with probability at least $1-\delta$, outputs an $\epsilon$-approximation of ${\mathbb{E}}_{\bm{Y}}[\phi(\bm{Y})]$, querying $\mathcal{A}_{\bm{Y}}$ and $W_\phi$ $O\left(\frac{1}{\epsilon}\log\left(\frac{1}{\delta}\right)\right)$ times each.
    \label{th:QMCI}
\end{theorem}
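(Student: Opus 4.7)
The plan is to assemble Theorem \ref{th:QMCI} directly from the ingredients already displayed in the excerpt: the state preparation $\mathcal{A}_{\bm{Y}}$, the controlled rotation $W_\phi$, the identity $\braket{\psi|P|\psi}=\mathbb{E}_{\bm{Y}}[\phi(\bm{Y})]$ for $P=I^{\otimes s}\otimes\ket{1}\bra{1}$, and QAE (Theorem \ref{th:QAE}). So the target quantity is cast as the squared amplitude that QAE is designed to estimate, and it only remains to choose $t$ to meet the accuracy $\epsilon$ and to boost the confidence to $1-\delta$.

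First I would describe the subroutine $\proc{QMCI}_0$: prepare $\ket{\psi}$ by applying $W_\phi(\mathcal{A}_{\bm{Y}}\otimes I)$ to $\ket{0}^{\otimes s}\ket{0}$, form the reflection $U=2\ket{\psi}\bra{\psi}-I$ as exhibited in the excerpt (which uses $\mathcal{A}_{\bm{Y}}$, $\mathcal{A}_{\bm{Y}}^\dagger$, $W_\phi$, $W_\phi^\dagger$ a constant number of times) and the reflection $V=I-2P$ (a single $Z$ gate on the flag qubit), and feed $(\ket{\psi},U,V,t)$ into QAE. By Theorem \ref{th:QAE} the output $\tilde{a}$ satisfies $|\tilde{a}-a|\le 2\pi\sqrt{a(1-a)}/t+\pi^2/t^2\le 2\pi/t+\pi^2/t^2$, so choosing $t=\lceil c/\epsilon\rceil$ for a sufficiently large absolute constant $c$ guarantees $|\tilde{a}-a|\le\epsilon$ with probability at least $p_0:=8/\pi^2$. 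Each call to $\proc{QMCI}_0$ therefore uses $\mathcal{A}_{\bm{Y}}$ and $W_\phi$ (and their inverses) $O(1/\epsilon)$ times.

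To upgrade the confidence from $p_0$ to $1-\delta$, I would use the standard powering/median trick: run $\proc{QMCI}_0$ independently $M=\Theta(\log(1/\delta))$ times to obtain $\tilde{a}_1,\dots,\tilde{a}_M$ and output their median $\tilde{a}$. By a Chernoff bound, since $p_0>1/2$, the fraction of indices $i$ with $|\tilde{a}_i-a|\le\epsilon$ exceeds $1/2$ except with probability at most $\exp(-\Omega(M))\le\delta$; on this good event the median lies within $\epsilon$ of $a$. The total query count to $\mathcal{A}_{\bm{Y}}$ and $W_\phi$ is $M\cdot O(1/\epsilon)=O\!\left(\tfrac{1}{\epsilon}\log(1/\delta)\right)$, matching the stated bound.

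The only real subtlety is the median-amplification step: one must check that $8/\pi^2>1/2$ (so Chernoff gives exponential confidence gain) and that the median of $\epsilon$-approximations is itself an $\epsilon$-approximation, both of which are routine. Everything else — the construction of $U$, the verification $\braket{\psi|P|\psi}=\mathbb{E}_{\bm{Y}}[\phi(\bm{Y})]$, and the $1/t$ scaling of QAE — is already supplied in the excerpt, so there is no substantive obstacle; the proof is essentially a packaging of Theorem \ref{th:QAE} together with the amplitude-encoding construction spelled out just above the theorem.
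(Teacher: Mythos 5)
Your proposal is correct and takes essentially the same route as the paper, which does not prove this theorem itself but cites Theorem 2.3 of \cite{Montanaro_2015} and sketches exactly your construction: QAE applied to $\ket{\psi}=W_\phi(\mathcal{A}_{\bm{Y}}\otimes I)\ket{0}^{\otimes s}\ket{0}$ with the flag-qubit projector $P$, followed by median amplification of the $8/\pi^2$ success probability (Lemma 2.1 in \cite{Montanaro_2015}, based on Lemma 6.1 in \cite{jerrum1986random}). The details you fill in (the choice $t=\lceil c/\epsilon\rceil$, the Chernoff argument for the median of $M=\Theta(\log(1/\delta))$ runs) are the standard and correct way to make that citation self-contained.
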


\ \\

Note that the success probability is enhanced from $8/\pi^2$ in Theorem \ref{th:QAE} to $1-\delta$ by taking the median of outputs of multiple runs of QAE (Lemma 2.1 in \cite{Montanaro_2015}), which is based on Lemma 6.1 in \cite{jerrum1986random}.

\section{Quantum algorithm for risk aggregation}\label{sec:OurQAlgo}

\subsection{Overview}

In this section, we show our main contribution: the algorithm for risk aggregation.
First, we note that risk aggregation for a set of risk variables $\bm{X}=(X_1,\cdots,X_d)$ can be formulated as the calculation of the expected value in the following form:
\begin{align}
    &\mathbb{E}_{\bm{X}}\left[g\left(\bm{X}\right)\right]\nonumber\\
    & = 
    \frac{1}{\mathcal{N}_{\bm{X}}^2}\sum_{\bm{x}\in\mathcal{G}_{\bm{X}}} g(\bm{x}^{(\bm{j})})f(\bm{x}^{(\bm{j})}) \nonumber \\
    &=\frac{1}{\mathcal{N}_{\bm{X}}^2}\sum_{\bm{x}\in\mathcal{G}_{\bm{X}}} g(\bm{x}^{(\bm{j})})c(F_1(x_1^{(j_1)}),...,F_d(x_d^{(j_d)}))\prod^d_{i=1}f_i(x_i^{(j_i)}) \nonumber \\
    &=\mathbb{E}^\mathrm{ind}_{\bm{X}}[g(\bm{X})c(F_1(X_1),...,F_d(X_d))].
    \label{eq:TgtExp}
\end{align}
Here, $g:\mathbb{R}^d\rightarrow[0,1]$ is a function chosen according to the type of the risk measure, as explained later.
$\mathbb{E}^\mathrm{ind}_{\bm{X}}[\cdot]$ is the expected value under the assumption that $X_1,\cdots,X_d$ are independent of each other, i.e. the PDF is expressed as
\begin{equation}
    f^\mathrm{ind}(x_1,\cdots,x_d)=\prod^d_{i=1}f_i(x_i).
\end{equation}
When we have the state preparation oracles $\mathcal{A}_{X_1},\cdots,\mathcal{A}_{X_d}$ for $X_1,\cdots,X_d$, we can generate the following quantum state
\begin{equation}
    \mathcal{A}_{\bm{X}}^{\rm ind}\ket{0}^{\otimes d}=\frac{1}{\mathcal{N}_{\bm{X}}}\sum_{\bm{j}\in\mathcal{G}_{\bm{X}}}\sqrt{f^\mathrm{ind}(\bm{x}^{(\bm{j})})}\ket{j_1}\cdots\ket{j_d},
\end{equation}
where $\mathcal{A}_{\bm{X}}^{\rm ind}:=\mathcal{A}_{X_1}\otimes\cdots\otimes\mathcal{A}_{X_d}$ and $\mathcal{N}_{\bm{X}}=\prod_{i=1}^d \mathcal{N}_{X_i}$ with $\mathcal{N}_{X_i}:=\sqrt{\sum_{j_i\in\mathcal{G}_{X_i}} f_i(x_i^{(j_i)})}$.
This motivates us to estimate \eqref{eq:TgtExp} by QMCI.
Since we are given explicit formulas for $c$ and $g$ in many cases, the remaining requirement for running QMCI is the formulation of marginal CDFs $F_1,\cdots,F_d$.
To do this, we use OSDE, which gives us explicit approximation formulas of the CDFs as orthogonal series.
In summary, our algorithm is composed of two steps: estimating the marginal CDFs by OSDE and estimating the risk measures.

\subsection{Estimating the marginal distributions}\label{subsec3-1}

Now, we explain the process to estimate the marginal CDFs.
Here, we only consider sufficiently smooth marginal distributions that can be well approximated by Hermite series (see the assumptions in Theorems \ref{th:HSErr}), which is usually the case in risk analysis.
We assume the availability of the state preparation oracle $A_{X_i}$ with $X_i$ and the controlled rotation with $\phi=\bar{h}_k$ for $k=0,1,\cdots$, where
\begin{equation}
    \bar{h}_k(x)=\frac{1}{2}+\frac{\pi^{1/4}}{2}h_k(x),
\end{equation}
which satisfies $0\le\bar{h}_k(x)\le1$.
Then, we run QMCI to estimate ${\mathbb{E}}_{X_i}[\bar{h}_k(X_i)]$
for $k=0,\cdots,K_i$ and get the estimations $\hat{a}^{X_i}_k$ for
\begin{equation}
    a^{X_i}_k:=\mathbb{E}_{X_i}[h_k(X_i)]=\frac{1}{\pi^{1/4}}\left(2{\mathbb{E}}_{X_i}[\bar{h}_k(X_i)]-1\right),
\end{equation}
the exact coefficient in the Hermite expansion for $f$.
Then, given the estimated coefficients, we have a Hermite series approximating $f_i$ as
\begin{equation}
\hat{f}_i(x)=\sum^{K_i}_{k=0}\hat{a}^{X_i}_k h_k(x).
\label{eq:hatf}
\end{equation}
Furthermore, by integrating this, we get an approximation $\hat{F}_i$ for $F_i$.
In fact, since the accuracy of the Hermite series approximation is guaranteed not on the entire real axis but in a finite interval, we set $\hat{F}_i$ to 0 or 1 outside the interval.
Namely, we define
\begin{align}
\hat{F}_i(x):= 
\begin{cases}
0 & ; \ x < -L_i \\ 
\sum^{K_i}_{k=0}\hat{a}^{X_i}_k \mathcal{H}_{k,L_i}(x) & ; -L_i \le x < L_i \\
1 & ; \ x \ge L_i
\end{cases}
\label{eq:hatF}
\end{align}
with sufficiently large $L_i>0$, where
\begin{equation}
    \mathcal{H}_{k,L_i}(x):=\int_{-L_i}^x h_k(t)dt.
\end{equation}
Now, we just assume that we can evaluate $\mathcal{H}_{k,L_i}$ with a negligible error, postponing the discussion on this point to Appendix \ref{sec:HFIntEval}.

The above method is summarized as Algorithm \ref{alg:marg}.

\begin{algorithm}[b]
\caption{$\proc{EstMarg}(X_i,K_i,\epsilon,\delta)$: estimation of $F_i$}\label{alg:marg}
\begin{algorithmic}[1]
\Require
\Statex Access to the state preparation oracle $\mathcal{A}_{X_i}$.
\Statex Access to the rotation oracles $W_{\bar{h}_k}$ for $k=0,\cdots,K_i$.
\Statex $L_i>0$ and $\epsilon,\delta\in(0,1)$.

\For{$k=0,\dots,K_i$}
\State Run $\proc{QMCI}\left(\mathcal{A}_{X_i},\bar{h}_k,\frac{\sqrt{\pi}\epsilon}{16L_i(K_i+1)},\frac{\delta}{K_i+1}\right)$ and let the output be the estimation $\hat{\mathbb{E}}_{X_i}[\bar{h}_k(X_i)]$.
\State Set $\hat{a}^{X_i}_k:=\frac{1}{\pi^{1/4}}\left(2\hat{\mathbb{E}}_{X_i}[\bar{h}_k(X_i)]-1\right)$.
\EndFor

%\State Define $\hat{f}_{X_i}$ as Eq. \eqref{eq:hatf}.

\State Define $\hat{F}_i$ as \eqref{eq:hatF} and output it.

\end{algorithmic}
\end{algorithm}

We give the following theorem on the accuracy and complexity of this method.\\

\begin{theorem}
Let $\delta,\epsilon\in(0,1)$.
Let $X_i$ be a real-valued random variable following the distribution $f_i$. Assume the following:
\begin{enumerate}
\renewcommand{\labelenumi}{(\roman{enumi})}
\item $f_i$ has the properties of Theorem \ref{th:HSErr} for $r=r_i$.
\item Access to the state preparation oracle $\mathcal{A}_{X_i}$.
\item Access to the controlled rotation oracle $W_{\phi}$ with $\phi=\tilde{h}_k$ for $k=0,\cdots,K_i$.
\item For some $L_i>0$, $F_{i}(-L_i)\le \frac{\epsilon}{2}$ and $F_{i}(L_i)\ge 1-\epsilon$ hold.

\end{enumerate}
Then, with probability at least $1-\delta$, we get $\hat{F}_i$ such that 
\begin{equation}
\left|\hat{F}_i(x)-F_i(x)\right|\le\epsilon
\label{eq:FhatErr}
\end{equation}
holds for any $x\in\mathbb{R}$ by $\proc{EstMarg}\left(X_i,K_i,\epsilon,\delta\right)$, which queries $\mathcal{A}_{X_i}$ and $W_{\bar{h}_0},\cdots,W_{\bar{h}_{K_i}}$
\begin{align}
    O\left(\frac{L_iK_i^2}{\epsilon}\log \left(\frac{K_i}{\delta}\right)\right)\label{eq:CompMarg}
\end{align}
times. Here we set
\begin{equation}
    K_i=\left\lceil\left(\frac{8L_i\gamma_i}{\epsilon}\right)^{4/(2r_i-1)}\right\rceil,
    \label{eq:Ki}
\end{equation}
where $\gamma_i$ is a real number such that
\begin{equation}
    \sup_{x\in[-L_i,L_i]} \left|{f}^{(K)}_i(x)-f_i(x)\right|\le \gamma_i K^{-\frac{r_i}{2}+\frac{1}{4}}
    \label{eq:HSErrfXi}
\end{equation}
holds for any $K\in\mathbb{N}$~\footnote{Because of (i), such $\gamma_i$ exists, as implied by Theorem \ref{th:HSErr}.}.
\label{th:marg}
\end{theorem}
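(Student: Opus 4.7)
The plan is to split $|\hat F_i(x)-F_i(x)|$ into three pieces and bound each by $O(\epsilon)$. On the interior $x\in[-L_i,L_i]$, expanding \eqref{eq:hatf}--\eqref{eq:hatF} and writing $F_i(x)=F_i(-L_i)+\int_{-L_i}^x f_i(t)\,dt$ gives
\begin{equation*}
\hat F_i(x)-F_i(x) = \sum_{k=0}^{K_i}\bigl(\hat a^{X_i}_k-a^{X_i}_k\bigr)\mathcal{H}_{k,L_i}(x) + \int_{-L_i}^{x}\bigl(f_i^{(K_i)}(t)-f_i(t)\bigr)\,dt - F_i(-L_i),
\end{equation*}
whose three pieces are respectively the QMCI coefficient-estimation error, the Hermite-series truncation error, and the missing left-tail mass.

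For the first piece, the linear relation $\hat a^{X_i}_k = \pi^{-1/4}(2\hat{\mathbb E}_{X_i}[\bar h_k(X_i)]-1)$ converts the QMCI accuracy $\sqrt{\pi}\epsilon/(16L_i(K_i+1))$ from Theorem \ref{th:QMCI} into $|\hat a^{X_i}_k-a^{X_i}_k|\le \pi^{1/4}\epsilon/(8L_i(K_i+1))$; combined with $|\mathcal{H}_{k,L_i}(x)|\le 2L_i\pi^{-1/4}$ (obtained by integrating \eqref{eq:HFBnd} over $[-L_i,L_i]$) and summed over $K_i+1$ terms, this contributes at most $\epsilon/4$. For the second piece, the choice \eqref{eq:Ki} of $K_i$ inserted into \eqref{eq:HSErrfXi} gives $\sup_{t\in[-L_i,L_i]}|f_i^{(K_i)}(t)-f_i(t)|\le\epsilon/(8L_i)$, and since the interval of integration has length at most $2L_i$, this piece is also at most $\epsilon/4$. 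The third piece is at most $\epsilon/2$ by assumption (iv). In the exterior regions, \eqref{eq:hatF} pins $\hat F_i$ to $0$ or $1$, and assumption (iv) directly yields $|\hat F_i(x)-F_i(x)|\le\max\{F_i(-L_i),\,1-F_i(L_i)\}\le\epsilon$.

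A union bound over the $K_i+1$ QMCI invocations inside \proc{EstMarg}, each with failure probability $\delta/(K_i+1)$, gives overall success probability at least $1-\delta$; and Theorem \ref{th:QMCI} contributes $O\bigl(\tfrac{L_i K_i}{\epsilon}\log\tfrac{K_i}{\delta}\bigr)$ queries per call, producing the total cost \eqref{eq:CompMarg}. The proof is essentially error bookkeeping; the only slightly delicate point is reconciling the $L^\infty[-L_i,L_i]$ nature of the guarantee in Theorem \ref{th:HSErr} with the need for a uniform bound on all of $\mathbb R$, and it is precisely this gap that forces the truncation of $\hat F_i$ to $0$ and $1$ outside $[-L_i,L_i]$ together with the tail hypothesis (iv).
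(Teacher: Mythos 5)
Your proof is correct and follows essentially the same route as the paper's: the same splitting into tail mass, Hermite truncation error, and QMCI coefficient error, with identical $\epsilon/2+\epsilon/4+\epsilon/4$ bookkeeping and the same union-bound and query-count accounting. The only cosmetic difference is that you bound $|\mathcal{H}_{k,L_i}(x)|\le 2L_i\pi^{-1/4}$ and control the CDF error directly, whereas the paper first bounds the density error $|\hat f_i - f_i|$ pointwise by $\epsilon/(4L_i)$ and then integrates; these are the same computation in a different order.
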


\begin{proof}
    Because of the definition of $\hat{F}_i$ in \eqref{eq:hatF} and (iv), it is immediately seen that \eqref{eq:FhatErr} holds for any $x\in(-\infty,-L_i]\cup[L_i,\infty)$.
    Thus, we hereafter focus on the case that $x\in(-L_i,L_i)$.

    We start by evaluating $\left|\hat{f}_i(x)-f_i(x)\right|$.
    Decomposing it as
    \begin{equation}
        \left|\hat{f}_i(x)-f_i(x)\right|\le \left|\hat{f}_i(x)-{f}^{(K_i)}_i(x)\right| + \left|{f}^{(K_i)}_i(x)-f_i(x)\right|,
        \label{eq:TriIneqf}
    \end{equation}
    we bound the first and second terms separately.
    On the second term, for $x\in(-L_i,L_i)$, we have
    \begin{equation}
        \left|{f}^{(K_i)}_i(x)-f_i(x)\right|\le \gamma_i K_i^{-\frac{r_i}{2}+\frac{1}{4}} \le \frac{\epsilon}{8L_i},
        \label{eq:Diff_tilf_f}
    \end{equation}
    where we use \eqref{eq:Ki} and \eqref{eq:HSErrfXi}.
    To bound the first one, we temporarily assume that $\proc{QMCI}\left(\mathcal{A}_{X_i},\bar{h}_k,\frac{\sqrt{\pi}\epsilon}{16L_i(K_i+1)},\frac{\delta}{K_i+1}\right)$ for every $k=0,\cdots,K_i$ outputs the estimation $\hat{\mathbb{E}}_{X_i}[\bar{h}_k(X_i)]$ such that
    \begin{equation}
        \left|\hat{\mathbb{E}}_{X_i}[\bar{h}_k(X_i)]-{\mathbb{E}}_{X_i}[\bar{h}_k(X_i)]\right|\le \frac{\sqrt{\pi}\epsilon}{16L_i(K_i+1)}.
        \label{eq:alphaErr}
    \end{equation}
    This implies that
    \begin{align}
        \left| \hat{a}_k^{X_i}-{a}_k^{X_i}\right| &= \frac{2}{\pi^{1/4}}\left|\hat{\mathbb{E}}_{X_i}[\bar{h}_k(X_i)]-{\mathbb{E}}_{X_i}[\bar{h}_k(X_i)]\right|\\
        &\le \frac{\pi^{1/4}\epsilon}{8L_i(K_i+1)}.
        \label{eq:diff_ahat_atil}
    \end{align}
    We then have
    \begin{align}
        \left|\hat{f}_i(x)-{f}^{(K_i)}_i(x)\right|&=\left|\sum_{k=0}^{K_i} (\hat{a}_k^{X_i}-a_k^{X_i})h_k(x)\right| \nonumber \\
        &\le\sum_{k=0}^{K_i}\left| \hat{a}_k^{X_i}-a_k^{X_i}\right| \pi^{-1/4} \nonumber \\
        &\le\frac{\epsilon}{8L_i},
        \label{eq:diff_ftil_fhat}
    \end{align}
    where we use \eqref{eq:HFBnd} at the first inequality, and \eqref{eq:diff_ahat_atil} at the third inequality.
    Combining \eqref{eq:TriIneqf}, \eqref{eq:Diff_tilf_f}, and \eqref{eq:diff_ftil_fhat}, we get
    \begin{equation}
        \left|\hat{f}_i(x)-f_i(x)\right| \le \frac{\epsilon}{4L_i}.
    \end{equation}
    Then, this yields
    \begin{align}
        \left|F_i(x)-\hat{F}_i(x)\right| & \le \left|F_i(-L_i)\right| + \left|\int_{-L_i}^x f_i(y)dy - \int_{-L_i}^x \hat{f}_i(y)dy\right| \nonumber \\
        &\le \frac{\epsilon}{2} + \int_{-L_i}^x \left|\hat{f}_i(y)-f_i(y)\right| dy  \nonumber \\
        &\le \frac{\epsilon}{2} + \frac{\epsilon}{4L_i} \times (x+L_i) \nonumber \\
        & \le \epsilon
    \end{align}
    for $x\in(-L_i,L_i)$.

    To complete the proof, let us prove the statements on the success probability and complexity.
    Since the probability that each of $\proc{QMCI}\left(\mathcal{A}_{X_i},\bar{h}_k,\frac{\sqrt{\pi}\epsilon}{16L_i(K_i+1)},\frac{\delta}{K_i+1}\right)$ for $k=0,\cdots,K_i$ outputs $\hat{\mathbb{E}}_{X_i}[\bar{h}_k(X_i)]$ satisfying \eqref{eq:alphaErr} is at least $1-\frac{\delta}{K_i+1}$, the probability that all of them output such estimations is at least $\left(1-\frac{\delta}{K_i+1}\right)^{K_i+1} \ge 1-\delta$.
    The number of queries to $\mathcal{A}_{X_i}$ and $W_{\bar{h}_k}$ in $\proc{QMCI}\left(\mathcal{A}_{X_i},\bar{h}_k,\frac{\sqrt{\pi}\epsilon}{16L_i(K_i+1)},\frac{\delta}{K_i+1}\right)$ is
    \begin{equation}
        O\left(\frac{L_i(K_i+1)}{\epsilon}\log\left(\frac{K_i+1}{\delta}\right)\right)
    \end{equation}
    as implied by Theorem \ref{th:QMCI}. Summing this up for $k=0,\cdots,K_i$, we reach the total query number estimation in \eqref{eq:CompMarg}.
\end{proof}

\subsection{Estimating the risk measure}\label{subsec3-2}
Given the approximations of the CDFs $\hat{F}_1,\cdots,\hat{F}_d$ by the method above, we finally consider estimating the expected value $\mathbb{E}_{\bm{X}}\left[g\left(\bm{X}\right)\right]$ in \eqref{eq:TgtExp}.
We now assume that we have access to the controlled rotation oracle $W_\Phi$ for the function $\Phi:\mathbb{R}^d \rightarrow [0,1]$ expressed by
\begin{equation}
    \Phi(\bm{x})=\frac{1}{c_{\rm max}} g(\bm{x})c(\hat{F}_1(x_1),...,\hat{F}_d(x_d))
    \label{eq:PhiForm}
\end{equation}
with $c_{\rm max}:=\max_{\bm{u}\in[0,1]^d} c(\bm{u})$ and any functions $\hat{F}_1,\cdots,\hat{F}_d$ written in the form of \eqref{eq:hatF}. Then, we present Algorithm \ref{alg:RMEst} to estimate $\mathbb{E}_{\bm{X}}\left[g\left(\bm{X}\right)\right]$.

\begin{algorithm}[H]
\caption{$\proc{EstRM}(c,g,\epsilon,\delta)$: estimation of $\mathbb{E}_{F_{\bm{X}}}\left[g\left(\bm{X}\right)\right]$}\label{alg:RMEst}
\begin{algorithmic}[1]
\Require
\Statex Access to the state preparation oracles $\mathcal{A}_{X_1},\cdots,\mathcal{A}_{X_d}$.
\Statex Access to the rotation oracles $W_{\bar{h}_k}$ for $k=0,\cdots,K_i$ and $W_{\Phi}$ for $\Phi$ in \eqref{eq:PhiForm}.
\Statex $L_1,\cdots,L_d>0$ and $\epsilon,\delta\in(0,1)$.

\For{$i=1,\dots,d$}
\State Set $K_i$ as
\begin{equation}
    K_i=\left\lceil\left(\frac{16d c_\mathrm{max}^\prime L_i\gamma_i}{\epsilon}\right)^{4/(2r_i-1)}\right\rceil
    \label{eq:Ki2}
\end{equation}
with $c_\mathrm{max}^\prime$ given in Theorem \ref{th:RMEst}.

\State Run $\proc{EstMarg}\left(X_i,K_i,\frac{\epsilon}{2 d c_\mathrm{max}^\prime },\frac{\delta}{2d}\right)$ and let the output be $\hat{F}_i$.
\label{setp:EstMargInEstRM}

\EndFor

\State Run $\proc{QMCI}\left(\mathcal{A}^{\rm ind}_{\bm{X}},\hat{\Phi},\frac{\epsilon}{2c_\mathrm{max}},\frac{\delta}{2}\right)$, where
\begin{equation}
    \hat{\Phi}(\bm{x})=\frac{1}{c_{\rm max}} g(\bm{x})c(\hat{F}_1(x_1),...,\hat{F}_d(x_d)),
\end{equation}
and let the result be $\hat{\mathbb{E}}^\mathrm{ind}_{\bm{X}}(\hat{\Phi})$.
\label{step:LastQMCI}

\State Output $c_{\rm max}\hat{\mathbb{E}}^\mathrm{ind}_{\bm{X}}(\hat{\Phi})$.

\end{algorithmic}
\end{algorithm}

The following theorem is on the accuracy and complexity of this algorithm.\\

\begin{theorem}
    Let $\delta,\epsilon\in(0,1)$.
    Let $c:[0,1]^d\rightarrow\mathbb{R}$ be the density of copula and suppose that there exists $c^\prime_\mathrm{max}\in\mathbb{R}$ such that, for any $i=1,\cdots,d$ and any $\bm{u}\in[0,1]^d$,
    \begin{equation}
        \left|\frac{\partial}{\partial u_i}c(u_1,\cdots,u_d)\right|\le c^\prime_\mathrm{max}.
        \label{eq:cDerBound}
    \end{equation}
    Let $X_1,\cdots,X_d$ be real-valued random variables and suppose that all the assumptions in Theorem \ref{th:marg} are satisfied for every $X_i$ except that $\epsilon$ is replaced with $\frac{\epsilon}{2 d c_\mathrm{max}^\prime }$.
    Suppose that we have access to the rotation oracle $W_{\Phi}$ for any function in the form of \eqref{eq:PhiForm} with $g:\mathbb{R}^d\rightarrow[0,1]$.
    Then, with probability at least $1-\delta$, Algorithm \ref{alg:RMEst} outputs an $\epsilon$-approximation of $\mathbb{E}_{\bm{X}}\left[g\left(\bm{X}\right)\right]$, querying $W_{\bar{h}_k}$
    \begin{align}
        O\left(\frac{d^2c^\prime_\mathrm{max}LK^2}{\epsilon}\log\left(\frac{dK}{\delta}\right)\right)\label{eq:compRMEst1}
    \end{align}
    times, 
    $W_\Phi$
    \begin{equation}
        O\left(\frac{c_\mathrm{max}}{\epsilon}\log\left(\frac{1}{\delta}\right)\right)
        \label{eq:compRMEst2}
    \end{equation}
    times, and
    $\mathcal{A}_{X_i}$
    \begin{equation}
        O\left(\frac{d^2c^\prime_\mathrm{max}LK^2}{\epsilon}\log\left(\frac{dK}{\delta}\right)+\frac{c_\mathrm{max}}{\epsilon}\log\left(\frac{1}{\delta}\right)\right)
        \label{eq:compRMEst3}
    \end{equation}
    times, where $L:=\max_{i=1,\cdots,d} L_i$ and $K:=\max_{i=1,\cdots,d} K_i$.
    
    \label{th:RMEst}
\end{theorem}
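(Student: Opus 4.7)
The plan is to decompose the target using the copula identity in \eqref{eq:TgtExp} so that $\mathbb{E}_{\bm X}[g(\bm X)] = c_{\max}\,\mathbb{E}^{\rm ind}_{\bm X}[\Phi(\bm X)]$ with $\Phi$ built from the true marginals, and then to control the difference between this quantity and the algorithm's output $c_{\max}\,\hat{\mathbb{E}}^{\rm ind}_{\bm X}(\hat\Phi)$ by the triangle inequality
\begin{align}
&\left|c_{\max}\hat{\mathbb{E}}^{\rm ind}_{\bm X}(\hat\Phi)-\mathbb{E}_{\bm X}[g(\bm X)]\right| \nonumber \\
&\le c_{\max}\left|\hat{\mathbb{E}}^{\rm ind}_{\bm X}(\hat\Phi)-\mathbb{E}^{\rm ind}_{\bm X}[\hat\Phi(\bm X)]\right|+c_{\max}\left|\mathbb{E}^{\rm ind}_{\bm X}[\hat\Phi(\bm X)-\Phi(\bm X)]\right|,
\end{align}
and showing that each of the two terms is at most $\epsilon/2$.

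The first term is immediate from Theorem \ref{th:QMCI}: the QMCI invocation in Step \ref{step:LastQMCI} is run with accuracy $\frac{\epsilon}{2c_{\max}}$ and failure probability $\frac{\delta}{2}$, and $\hat\Phi$ does take values in $[0,1]$ because $g\in[0,1]$ and $c\le c_{\max}$. The second term reduces to a pointwise bound on $|\hat\Phi(\bm x)-\Phi(\bm x)|$. Since $g(\bm x)\in[0,1]$, it suffices to bound the difference of the copula densities. The main analytic step I would use is a telescoping argument that swaps the arguments one coordinate at a time from $F_i(x_i)$ to $\hat F_i(x_i)$, applying the mean value theorem together with the uniform partial-derivative bound \eqref{eq:cDerBound} to each of the $d$ increments, to obtain
\begin{equation}
    |c(\hat{F}_1(x_1),\ldots,\hat{F}_d(x_d))-c(F_1(x_1),\ldots,F_d(x_d))|\le c^\prime_{\max}\sum_{i=1}^{d}|\hat{F}_i(x_i)-F_i(x_i)|.
\end{equation}
Because Step \ref{setp:EstMargInEstRM} invokes $\proc{EstMarg}$ with accuracy $\frac{\epsilon}{2dc^\prime_{\max}}$, Theorem \ref{th:marg} ensures $|\hat F_i(x_i)-F_i(x_i)|\le\frac{\epsilon}{2dc^\prime_{\max}}$ uniformly, so the sum is at most $\frac{\epsilon}{2c_{\max}^\prime}\cdot\frac{1}{d}\cdot d = \frac{\epsilon}{2c^\prime_{\max}}$, and consequently $|\hat\Phi(\bm x)-\Phi(\bm x)|\le\frac{\epsilon}{2c_{\max}}$ pointwise. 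Taking the expectation and multiplying by $c_{\max}$ gives the desired $\epsilon/2$.

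For the success probability I would apply a union bound over the $d$ calls to $\proc{EstMarg}$ (each failing with probability at most $\frac{\delta}{2d}$) and the single QMCI call (failing with probability at most $\frac{\delta}{2}$), for overall failure at most $\delta$. For the complexity, I would substitute the reduced accuracy $\frac{\epsilon}{2dc^\prime_{\max}}$ and failure probability $\frac{\delta}{2d}$ into \eqref{eq:CompMarg}, sum over $i=1,\ldots,d$, and bound by $L=\max_i L_i$ and $K=\max_i K_i$ to obtain \eqref{eq:compRMEst1} for $W_{\bar h_k}$. The bound \eqref{eq:compRMEst2} for $W_\Phi$ follows directly from Theorem \ref{th:QMCI} applied at Step \ref{step:LastQMCI}. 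The count \eqref{eq:compRMEst3} for $\mathcal{A}_{X_i}$ is obtained by adding its uses inside $\proc{EstMarg}$ (the first term) to its uses via $\mathcal{A}^{\rm ind}_{\bm X}$ during the final QMCI (the second term, since one invocation of $\mathcal{A}^{\rm ind}_{\bm X}$ calls $\mathcal{A}_{X_i}$ once). The only nontrivial step is the telescoping/mean-value argument; everything else is bookkeeping on top of Theorems \ref{th:QMCI} and \ref{th:marg}.
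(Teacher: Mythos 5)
Your proposal is correct and follows essentially the same route as the paper: the same two-term triangle-inequality decomposition with each term bounded by $\epsilon/2$, the same coordinate-wise mean-value bound $|c(\hat F_1,\ldots,\hat F_d)-c(F_1,\ldots,F_d)|\le c'_{\max}\sum_i|\hat F_i-F_i|$ (the paper invokes Taylor's theorem for this), and the same probability and query-count bookkeeping. No substantive differences.
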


\begin{proof}
    As implied by Theorem \ref{th:marg}, $\hat{F}_i$ obtained in step \ref{setp:EstMargInEstRM} in Algorithm \ref{alg:RMEst} satisfies
    \begin{equation}
        \forall x\in\mathbb{R}, \ \left|\hat{F}_i(x)-F_i(x)\right| \le \frac{\epsilon}{2 d c_\mathrm{max}^\prime }
        \label{eq:diff_Fhat_F_RM}
    \end{equation}
    with probability at least $1-\frac{\delta}{2d}$.
    If this holds, for any $\bm{x}\in\mathbb{R}^d$,
    \begin{align}
        & \left|\hat{\Phi}(\bm{x})-\Phi(\bm{x})\right| \nonumber \\
        &\le \frac{g(\bm{x})}{c_{\rm max}} \left|c(\hat{F}_1(x_1),...,\hat{F}_d(x_d))-c(F_1(x_1),...,F_d(x_d))\right| \nonumber \\
        &\le  \frac{c_\mathrm{max}^\prime g(\bm{x})}{c_\mathrm{max} }\sum_{i=1}^d \left|\hat{F}_i(x)-F_i(x)\right| \nonumber \\
        &\le  \frac{\epsilon}{2c_\mathrm{max}}
    \end{align}
    holds, where we use Taylor's theorem with \eqref{eq:cDerBound}.
    We then have
    \begin{align}
        &\left|{\mathbb{E}}^\mathrm{ind}_{\bm{X}}[\hat{\Phi}\left(\bm{X}\right)]-{\mathbb{E}}^\mathrm{ind}_{\bm{X}}\left[\Phi\left(\bm{X}\right)\right]\right| \nonumber \\
        &= \left|\frac{1}{\mathcal{N}_{\bm{X}}^2}\sum_{\bm{j}\in\mathcal{G}_{\bm{X}}} \left(\hat{\Phi}(\bm{x}^{(\bm{j})})-\Phi(\bm{x}^{(\bm{j})})\right)f^\mathrm{ind}(\bm{x}^{(\bm{j})})\right| \nonumber \\
        &\le \frac{1}{\mathcal{N}_{\bm{X}}^2}\sum_{\bm{j}\in\mathcal{G}_{\bm{X}}} \left|\hat{\Phi}(\bm{x}^{(\bm{j})})-\Phi(\bm{x}^{(\bm{j})})\right|f^\mathrm{ind}(\bm{x}^{(\bm{j})}) \nonumber \\
        &\le \frac{\epsilon}{2c_\mathrm{max}}.
        \label{eq:diff_EtilPhihat_EtilPhi}
    \end{align}
    On the other hand, $\proc{QMCI}\left(\mathcal{A}^\mathrm{ind}_{\bm{X}},\hat{\Phi},\frac{\epsilon}{2c_\mathrm{max}},\frac{\delta}{2}\right)$ outputs $\hat{\mathbb{E}}_{\bm{X}}[\hat{\Phi}(\bm{X})]$ such that
    \begin{equation}
        \left|\hat{\mathbb{E}}^\mathrm{ind}_{\bm{X}}[\hat{\Phi}(\bm{X})]-{\mathbb{E}}^\mathrm{ind}_{\bm{X}}[\hat{\Phi}\left(\bm{X}\right)]\right|\le \frac{\epsilon}{2c_\mathrm{max}}
        \label{eq:diff_Ehat_EtilPhihat}
    \end{equation}
    with probability at least $1-\delta$.
    Combining \eqref{eq:diff_EtilPhihat_EtilPhi} and \eqref{eq:diff_Ehat_EtilPhihat}, we get
    \begin{align}
        & \left|c_\mathrm{max}\hat{\mathbb{E}}^\mathrm{ind}_{\bm{X}}[\hat{\Phi}(\bm{X})]-\mathbb{E}_{\bm{X}}\left[g\left(\bm{X}\right)\right]\right| \nonumber \\
        &\le  \ c_\mathrm{max}\left(\left|\hat{\mathbb{E}}^\mathrm{ind}_{\bm{X}}[\hat{\Phi}(\bm{X})]-{\mathbb{E}}^\mathrm{ind}_{\bm{X}}[\hat{\Phi}\left(\bm{X}\right)]\right| \right. \nonumber \\
        &\qquad\qquad\left.+\left|{\mathbb{E}}^\mathrm{ind}_{\bm{X}}[\hat{\Phi}\left(\bm{X}\right)]-{\mathbb{E}}^\mathrm{ind}_{\bm{X}}[\Phi\left(\bm{X}\right)]\right|\right) \nonumber \\
        &\le  \ \epsilon.
    \end{align}
    This holds if every $\hat{F}_i$ satisfies \eqref{eq:diff_Fhat_F_RM} and \eqref{eq:diff_Ehat_EtilPhihat} holds, whose probability is at least
    \begin{equation}
        \left(1-\frac{\delta}{2d}\right)^d\left(1-\frac{\delta}{2}\right)\ge 1-\delta.
    \end{equation}
    
    Lastly, let us evaluate the query complexity of the algorithm.
    In estimating each $\hat{F}_i$, $\mathcal{A}_{X_i}$ and $\{W_{h_k}\}_k$ are queried
    \begin{equation}
        O\left(\frac{dc^\prime_\mathrm{max}L_iK_i^2}{\epsilon}\log\left(\frac{dK_i}{\delta}\right)\right)
    \end{equation}
    times, and, in estimating all of $\hat{F}_1,\cdots,\hat{F}_d$, this is multiplied by $d$.
    In $\proc{QMCI}\left(\mathcal{A}^{\rm ind}_{\bm{X}},\hat{\Phi},\frac{\epsilon}{2c_\mathrm{max}},\frac{\delta}{2}\right)$, $\mathcal{A}^{\rm ind}_{\bm{X}}$ and $W_{\hat{\Phi}}$ are called the number of times of order \eqref{eq:compRMEst2}.
    In $\mathcal{A}_{\bm{X}}^{\rm ind}$, $\mathcal{A}_{X_1},\cdots,\mathcal{A}_{X_d}$ are called once each and $d$ times in total.
    Combining these observations, we reach the query number bounds in \eqref{eq:compRMEst1}, \eqref{eq:compRMEst2}, and \eqref{eq:compRMEst3}.
    
\end{proof}

Substituting $K=\left\lceil(16d c_\mathrm{max}^\prime L\gamma/\epsilon)^{4/(2r-1)}\right\rceil$ for $\gamma:=\max_{i=1,\cdots,d} \gamma_i$ and $r:=\min_{i=1,\cdots,d} r_i$ into \eqref{eq:compRMEst3}, we obtain the number of queries to $\mathcal{A}_{X_i}$ scaling on accuracy $\epsilon$ as
\begin{align}
    \tilde{O}\left((1/\epsilon)^{1+\frac{8}{2r-1}}\right).
\end{align}
If the smoothness parameter $r$ is high, the query complexity asymptotically approaches $\tilde{O}(1/\epsilon)$, and the algorithm provides nearly quadratic quantum speed-up.

Finally, we discuss how to utilize the algorithm $\proc{EstRM}$ to calculate risk measures.
To calculate ${\rm VaR}_S(\alpha)$, we first construct a procedure to compute $\mathrm{Pr}(S \ge l)$ for $S=X_1+\cdots+X_d$ and $l\in\mathbb{R}$ based on Algorithm \ref{alg:RMEst}, with $g$ set to
\begin{align}
    g_{\rm VaR}(\bm{x};l)=\mathbbm{1}_{s \ge l}.
\end{align}
Then, for $\alpha\in(0,1)$, we search the smallest $l_\alpha$ satisfying
\begin{align}
    \mathrm{Pr}(S \ge l_\alpha)\le 1-\alpha,
\end{align}
which corresponds to VaR$_S(\alpha)$, using a root-finding method.
For example, we can use binary search, which increases the total query complexity only by a logarithmic factor.
For TVaR, using the calculated VaR $l_\alpha$, we run Algorithm \ref{alg:RMEst} setting $g$ to
\begin{align}
    g_{\rm TVaR}(\bm{x};l_\alpha)=\sum_{i=1}^d \frac{x_i}{x_{\rm max}} \mathbbm{1}_{s \ge l_\alpha},
    \label{eq:gTVaR}
\end{align}
where $x_{\rm max}:=\max_{i=1,...,d}x_i^{(N_i)}$, and let the output multiplied by $x_{\rm max}$ be the estimation of TVaR.

\section{Conclusion}\label{sec:Concl}
We developed a quantum algorithm to conduct risk aggregation, a fundamental task in financial institutions.
In particular, we considered a standard approach that uses copula, a tool to generate a joint distribution from marginal distributions, to express the structure of the correlations among the risk variables.
However, the common classical algorithm for this approach is based on sampling and rearranging, which cannot be straightforwardly applied in the quantum setting, and thus no efficient quantum algorithm was known.
To address this challenge, we %have
developed a quantum method consisting of two steps.
First, we estimate each marginal CDF based on OSDE.
As each coefficient in the orthogonal series is expressed as the expected value of the orthogonal function, we apply QMCI to estimate it.
Then, we plug the estimated CDFs into the copula to get the joint distribution, from which we calculate the risk measure using QMCI once more.
Using this method, we achieved a nearly quadratic quantum speed-up of risk aggregation, compared to the classical method for smooth marginal PDFs. 

\section*{Acknowledgment}

This work is supported by MEXT Quantum Leap Flagship Program (MEXTQLEAP) Grant
No. JPMXS0120319794, and JST COI-NEXT program Grant No. JPMJPF2014. KM is also supported by JSPS KAKENHI Grant no. JP22K11924.

\begin{appendices}
\section{Copulas}\label{sec:Copula}
For readers unfamiliar with copula, we introduce some widely used types: 

\begin{itemize}
    \item Gaussian copula: for any $n \times n$ correlation matrix $R=(\rho_{ij})$, the Gaussian copula is defined by
    \begin{equation}
        C=\Phi_{n,R}(\Phi_\mathrm{SN}^{-1}(\cdot),\cdots,\Phi_\mathrm{SN}^{-1}(\cdot)),
    \end{equation}
    where $\Phi_\mathrm{SN}$ is the CDF for the standard normal distribution and $\Phi_{n,R}$ is the CDF for the $n$-variate normal distribution with zero mean vector and covariance matrix equal to $R$. This is one of the most widely used copulas in practice because of its simplicity, but its tail dependence can be too weak in some situations.

    \item T-copula: with a parameter $\nu>0$ called the number of degrees of freedom (DOF) and a positive-define matrix $\Sigma\in\mathbb{R}^{n \times n}$, a T-copula is defined by
    \begin{equation}
        C=t_{n,\nu,\Sigma}(t_\nu^{-1}(\cdot),\cdots,t_\nu^{-1}(\cdot)),
    \end{equation}
    where $t_\nu$ is the CDF for the Student's $t$ distribution with the number of DOF $\nu$, and $t_{n,\nu,\Sigma}$ is the CDF for the $n$-variate $t$ distribution with the number of DOF $\nu$ and the covariance matrix $\Sigma$. This copula exhibits stronger tail dependence compared to Gaussian copulas and is preferred in certain scenarios.
\end{itemize}

\section{Evaluation of \texorpdfstring{$\mathcal{H}_{k,L}$}{Lg}}\label{sec:HFIntEval}

To evaluate $\mathcal{H}_{k,L}(x)$, we first consider how to evaluate
\begin{equation}
    \tilde{\mathcal{H}}_{l}(x):=\frac{1}{\sqrt{2\pi}}\int_{-\infty}^x y^l e^{-y^2/2} dy
\end{equation}
for $l\in\mathbb{N}\cup\{0\}$.
We can utilize a recurrence formula
\begin{equation}
    \tilde{\mathcal{H}}_{l}(x)=-\frac{1}{\sqrt{2\pi}}x^{l-1}e^{-x^2/2}+(l-1)\tilde{\mathcal{H}}_{l-2}(x), l\ge2
\end{equation}
with
\begin{equation}
    \tilde{\mathcal{H}}_{1}(x)=-\frac{1}{\sqrt{2\pi}}e^{-x^2/2}, \tilde{\mathcal{H}}_{0}(x)=\Phi_{\rm SN}(x).
\end{equation}
Thus, we can compute $\tilde{\mathcal{H}}_{l}(x)$ by evaluating the terms in the form of $x^le^{-x^2/2}$ and $\Phi_{\rm SN}(x)$.
We can evaluate the former on a quantum computer using circuits for elementary functions \cite{haner2018optimizing}.
On the latter, there are some approximation formulas with elementary functions (see, e.g., \cite{abramowitz1968handbook}), and thus it is also evaluated on a quantum computer.

\end{appendices}

\bibliography{cite}
\bibliographystyle{junsrt}

\end{document}